\documentclass[letterpaper, 10pt, conference]{ieeeconf}      

\IEEEoverridecommandlockouts                              

\usepackage{amsmath,amssymb,amsthm} 
\usepackage[T1]{fontenc}
\usepackage{microtype}  
\usepackage{tabularx} 
\usepackage{dsfont}
\usepackage{xcolor}

\newtheorem{theorem}{Theorem}[section]

\newtheorem{lemma}[theorem]{Lemma}

\newtheorem{remark}{Remark}

\usepackage{graphicx} 
\usepackage{cite}
\usepackage{booktabs}
\usepackage{algorithm}
\usepackage{algorithmic}

\usepackage{tikz}
\usetikzlibrary{shapes, arrows, positioning, calc, chains}

\makeatletter
\DeclareRobustCommand{\rvdots}{%
	\vbox{
		\baselineskip4\p@\lineskiplimit\z@
		\kern-\p@
		\hbox{.}\hbox{.}\hbox{.}
}}
\makeatother

\makeatletter
\def\footnoterule{\relax%
	\kern-5pt
	\hbox to \columnwidth{\hfill\vrule width .9\columnwidth height 0.4pt\hfill}
	\kern4.6pt}
\makeatother

\usepackage{hyperref}
\definecolor{darkblue}{rgb}{0.0,0.0,0.6}
\hypersetup{colorlinks,breaklinks,linkcolor=darkblue,urlcolor=darkblue,anchorcolor=darkblue,citecolor=darkblue}

\usepackage{subcaption}

\makeatother
\title{\huge Dynamic Weight Optimization for Double Linear Policy: A Stochastic Model Predictive Control Approach}

\author{\large Tan Chin Hong,$^{*}$ and Chung-Han Hsieh${}^{\dagger}$ \emph{Member, IEEE}
	\thanks{
    This paper is partially supported by the National Science and Technology Council~(NSTC), Taiwan, under Grant: NSTC114--2628--E--007--006--. ${}^{\dagger}$Corresponding Author: Chung-Han Hsieh is with the Department of Quantitative Finance, National Tsing Hua University, Hsinchu 300044, Taiwan. E-mail: \href{mailto: ch.hsieh@mx.nthu.edu.tw}{ch.hsieh@mx.nthu.edu.tw}. 
    } 
	\thanks{
   ${}^{*}$Tan Chin Hong is with the Institute of Statistics and Data Science, National Tsing Hua University, Hsinchu 300044, Taiwan. E-mail: \href{mailto: tan.family0312@gmail.com}{tan.family0312@gmail.com}.
    }
}

\begin{document}
	
\maketitle
\thispagestyle{empty}
\pagestyle{empty}
	
	
\begin{abstract} 
    The Double Linear Policy (DLP) framework guarantees a Robust Positive Expectation (RPE) under optimized constant-weight designs or admissible prespecified time-varying policies. However, the sequential optimization of these time-varying weights remains an open challenge. To address this gap, we propose a Stochastic Model Predictive Control (SMPC) framework. We formulate weight selection as a receding-horizon optimal control problem that explicitly maximizes risk-adjusted returns while enforcing survivability and predicted positive expectation constraints. Notably, an analytical gradient is derived for the non-convex objective function, enabling efficient optimization via the L-BFGS-B algorithm. 
    Empirical results demonstrate that this dynamic, closed-loop approach improves risk-adjusted performance and drawdown control relative to constant-weight and prescribed time-varying DLP baselines. 
\end{abstract}

	

\section{Introduction}
The \emph{Simultaneous Long-Short (SLS)} trading controller, pioneered in \cite{Barmish2011OnAP, Barmish2016OnAN, Barmish2011OnPL}, introduced the use of linear feedback control in robust algorithmic trading; see also \cite{barmish2024jump} for a recent tutorial. The defining feature of this paradigm is the \emph{Robust Positive Expectation} (RPE) property, which guarantees positive expected cumulative gain-loss across a wide class of asset-price processes. This theoretical property has motivated numerous extensions, including modifications for delay, cross-coupling, and Proportional-Integral (PI) control; see, for instance,~\cite{Abbracciavento2023CrossCoupledSF, Chen2025OnRO, deshpande2020simultaneous, Malekpour2018AGO, Deshpande2018AGO}.

Building on this foundation, the \emph{Double Linear Policy} (DLP),~\cite{hsieh2022robust} modified the SLS strategy to enable optimal weight selection via mean-variance criteria.
Early studies largely focused on constant-weight designs, including extensions for transaction costs \cite{Hsieh_2023}. Subsequent work by~\cite{wang2023robustness} established that the survivability and RPE properties are preserved under time-varying weights; however, those weight functions are specified rather than generated by an optimization principle.  
More recently, \cite{hsieh2025robust} extended DLP to more general multi-asset lattice markets.

Despite these advances, the question of how to \emph{optimally} select weights in a dynamic environment remains an open challenge. Prior work has largely relied on heuristic schemes or backward-looking statistical calibration---such as finding optimal constant weights over a historical window \cite{hsieh2025robust}. These approaches are fundamentally retrospective and may fail to adapt in real time to time-varying market conditions. To address this, Model Predictive Control (MPC) offers a promising alternative. While MPC has been widely applied in quantitative finance \cite{primbs2018applications, MultiObjective_melo, Primbs03062018, herzog2007stochastic}, its application to the specific multiplicative geometry and survivability constraints of DLP remains largely~unexplored.

In this paper, we propose a Stochastic MPC approach for the DLP framework. Unlike the calibration-based methods in~\cite{hsieh2025robust}, our SMPC approach generates a {dynamic sequence} of weights through receding-horizon optimization. The controller explicitly maximizes risk-adjusted returns while adhering to survivability and a \emph{predicted positive expectation} constraint.
Notably, we derive an analytical gradient for the non-convex objective, circumventing finite-difference approximations and enabling efficient numerical solution via the classical Limited-memory Broyden–Fletcher–Goldfarb–Shanno with Box constraints (L-BFGS-B) algorithm, e.g., see \cite{byrd1995limited, bemporad2025bfgs}, an extension of classical L-BFGS for handling bound constraints; see \cite{nocedal2006numerical}.


\section{Preliminaries}\label{section: preliminaries}

\subsection{Market and Account Dynamics}
Let $(\Omega,\mathcal{F}, \mathbb{P})$ be a complete probability space equipped with the filtration $\{\mathcal{F}_k\}_{k \in \mathbb{N}_0}$, where $\mathcal{F}_0 := \{\emptyset, \Omega\}$ up to $\mathbb{P}$-null sets, and $\mathcal{F}_k := \sigma(\{X(0), \dots, X(k-1)\})$ represents the information available up to time $k \geq 1$. Let $S(0) > 0$, and for $k=1,2,\dots$, let $S(k) > 0$ denote the risky asset price. The corresponding per-period return at time $k$ is defined as~$X(k) := \tfrac{S(k+1) - S(k)}{S(k)}.$ We assume that $X(k) \in [X_{\min}, X_{\max}]$ almost surely, where the deterministic bounds satisfy $-1 < X_{\min} < 0 < X_{\max} < \infty$.
Furthermore, for each time~$k$, we assume that  the future return sequence $\{X(i)\}_{i \geq k}$ is \emph{conditionally independent} given $\mathcal{F}_k$, in the sense that for any finite horizon~$H \geq 1$, the collection $\{X(i)\}_{i=k}^{k+H-1}$ is mutually independent given $\mathcal{F}_k$. 
For all $i \geq k$, we define the conditional mean $\mathbb{E}_k[X(i)] := \mathbb{E}[X(i) \mid \mathcal{F}_k] =: \mu_i^{(k)}$  and conditional variance $\operatorname{var}_k(X(i)) :=\mathbb{E}_k [\left( X(i) - \mathbb{E}_k[X(i)]\right)^2 ] =:  (\sigma_i^{(k)})^2 \geq 0$, which are $\mathcal{F}_k$-measurable.

\subsection{The Double Linear Policy and Account Value Dynamics}\label{DLP Formulation}
Following the standard DLP setting \cite{Hsieh_2023, wang2023robustness, hsieh2025robust}, the initial account $V(0) := V_0 >0$ is partitioned by $\alpha \in [0, 1]$ into two accounts: $V_L(0):= \alpha V_0$ and $V_S(0) = (1-\alpha)V_0$. The trading policy $\pi(k)$ at time $k$ is~$\pi(k):= \pi_L(k) + \pi_S(k)$, where the long and short components $\pi_L$ and $\pi_S$  follow the double linear~form:
\begin{align} \label{eq: double linear policy}
    \begin{cases}
        \pi_L(k) = w_L(k) V_L(k); \\
        \pi_S(k) = -w_S(k) V_S(k).
    \end{cases} 
\end{align}
where $w_L(k), w_S(k)$ are weighting functions satisfying
$
    (w_L(k), w_S(k)) \in \mathcal{W},
$
and the feasible set $\mathcal{W}$ satisfies
\[
\mathcal{W}  := \left\{ (w_L(k), w_S(k)) :  0\le w_L(k), w_S(k)\le w_{\max} \right\}
\]
with $w_{\max}  := \min \left\{ 1, \tfrac{1}{X_{\max} }  \right\}> 0$. 
The account values under~$\pi_L(k)$ and $\pi_S(k)$, denoted by~$V_L(k)$ and $V_S(k)$, is described by the following stochastic recursive equations:
\begin{align} \label{eq: DLP account dynamics}
    \begin{cases}
        V_L(k+1) = V_L(k) + X(k)\pi_L(k); \\ 
        V_S(k+1) = V_S(k) + X(k)\pi_S(k),
    \end{cases}
\end{align}
and the total account value is given by $V(k):=V_L(k) + V_S(k)$.

\subsection{Survivability and Robust Positive Expectation~(RPE)} \label{RPE}
We introduce two desirable properties that a trading policy should satisfy: First, the policy must have \emph{survivability} if~$V(k) >0$ for all~$k$ with probability one. 
By the definition of the admissible set $\mathcal{W}$, bounding $w_L(k) \leq 1$ and $w_S(k) \leq 1/X_{\max}$ inherently guarantees this property, as both the long and short multipliers governing the account dynamics in \eqref{eq: DLP account dynamics} are bounded strictly away from zero \cite{wang2023robustness}.

Second, the policy satisfies the~\emph{Robust Positive Expectation} (RPE) property if, under all market conditions, the expected cumulative gain is non-negative:
$
    \mathbb{E}[V(k) - V_0] \geq 0
$  
for all~$k$.

\section{Problem Formulation}\label{section: smpc formulation}
 Prior work~\cite{wang2023robustness} established the Robust Positive Expectation (RPE) property for the DLP  under a different return model, namely, independent returns with common mean and common variance.
In contrast, we instead adopt a more general modeling assumption that the future returns  $\{X(i)\}_{i \geq k}$ are conditionally independent given $\mathcal{F}_k$. Under this model, we formulate a stochastic model predictive control problem to select the common weight sequence $\{w_k\}_{k\geq 0}$ that maximizes the risk-adjusted terminal wealth, while enforcing survivability and a \emph{predicted positive expectation} property.

\subsection{State-Space Model}
We enforce a symmetric weighting scheme for long and short positions, i.e., $w_k:= w_L(k) = w_S(k)$, and we set $\alpha = 1/2$. Consider a system where the state vector $\mathbf{z}_k \in \mathbb{R}^2$ consists of the long and short account values, $V_L(k)$ and $V_S(k)$, and the scalar output ${y}_k=V(k)$ represents the total account value.  We define
\begin{align}
    \mathbf{z}_k 
        &:= \begin{bmatrix} 
            V_L(k) \\ 
            V_S(k) 
            \end{bmatrix}  \in \mathbb{R}^{2},
            \\ 
    y_k &:=  \mathbf{c}^\top\, \mathbf{z}_k = V(k) \in \mathbb{R}\label{eq: output}, 
\end{align}
where $\mathbf{c}:=[1 \; 1]^\top$. The initial state is given by
$
\mathbf{z}_0  :=  
    \begin{bmatrix} 
        V_L(0) & V_S(0) 
    \end{bmatrix}^\top$ 
with $y_0 = V_0 >0$. 
Recalling the individual long/short account dynamics from~\eqref{eq: DLP account dynamics}, the state evolution is governed by the time-varying linear stochastic system:
\begin{align}\label{eq: linear system}
    \mathbf{z}_{k+1} = A_k(w_k, X(k)) \, \mathbf{z}_k
\end{align}
where the transition matrix~$A_k(w_k, X(k)) \in \mathbb{R}^{2\times 2}$, denoted by $A_k$ for brevity, is defined at each time step $k$ as:
$$
A_k := 
    \begin{bmatrix} 
        1 + X(k) w_k & 0 \\ 
        0 & 1 - X(k) w_k 
    \end{bmatrix}.
$$ 
Under the time-varying system~\eqref{eq: linear system} and the output map~\eqref{eq: output}, the evolved total account value at horizon $H >0$ is given by
\begin{align*}
   y_{k+H} =  \mathbf{c}^\top\Phi_{k+H, k} \, \mathbf{z}_k
\end{align*}
where $\Phi_{k+H, k}$ is the state transition matrix, defined via the left-ordered product $\Phi_{k+H, k}:=  A_{k+H-1} A_{k+H-2} \cdots A_k.$

\subsection{Stochastic MPC with Mean-Variance Objective} \label{subsection: SMPC with MV objective}
For a prediction horizon $H > 0$, the SMPC problem maximizes the \emph{risk-adjusted predicted wealth} at the end of the horizon over the control sequence $\{w_i\}_{i=k}^{k+H-1} \subseteq \mathcal{W}$, subject to constraints on weights, predicted survivability, and a predicted positive expectation. 
\begin{align}
    \max_{\{w_i\}_{i=k}^{k+H-1}} \quad & \mathbb{E}_k[y_{k+H}] - \gamma \operatorname{var}_k(y_{k+H}) \label{eq:mv-problem} \\
    \text{s.t.} \quad 
    & 0 \leq w_i \leq w_{\max}, \quad i =k, \dots, k+H-1\\
    & y_i \geq 0 \; {\rm a.s.}, \quad i =k+1, \dots, k+H\label{eq:survivability}\\
    & \mathbb{E}_k[ y_{k+H} -   y_k] \geq 0. \label{eq:ppe}
\end{align}

\begin{remark}[RPE versus Predicted Positive Expectation] \rm
    While the standard DLP guarantees an unconditional robust positive expectation (RPE), the constraint~\eqref{eq:ppe} enforces a localized $H$-step-ahead \emph{predicted positive expectation} (PPE). When $H = 1$, the constraint explicitly forces the system to be a submartingale, i.e., $\mathbb{E}_k[y_{k+1}] \geq y_k$. Consequently, by invoking the tower property of conditional expectation, this single-step PPE condition inherently implies RPE. For~$H > 1$, this strict single-step guarantee is relaxed, and the constraint instead acts as a structural regularizer to ensure multi-step positive expected~growth.
\end{remark}

\subsection{Survivability Considerations}
The following lemma establishes a \emph{trajectory-wide predicted survivability} property for the proposed SMPC framework, guaranteeing that the future account value remains strictly positive at every step in the prediction horizon whenever~$0 \leq w_i \leq w_{\max}$.

\begin{lemma}[Trajectory-Wide Predicted Survivability] \label{lemma: predicted survivability}
    Fix a prediction horizon $H>0$. Suppose the current account values satisfy $V_L(k) >0$ and $V_S(k) \geq 0$. If the weight sequence satisfy $0 \leq w_i \leq w_{\max}$ for all $i \in \{k, k+1, \dots, k+H-1\}$, then for all $h=1,\dots,H$, the system output satisfies
    $$
        y_{k+h} >0 \, \text{ a.s.}
    $$
\end{lemma}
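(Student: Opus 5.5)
Because the transition matrices $A_i$ are diagonal, the two accounts evolve separately, and I will write each one in closed form as a product. Iterating \eqref{eq: linear system} from time $k$ gives, for $h=1,\dots,H$,
\begin{align*}
V_L(k+h) &= V_L(k)\prod_{i=k}^{k+h-1}\bigl(1+X(i)w_i\bigr),\\
V_S(k+h) &= V_S(k)\prod_{i=k}^{k+h-1}\bigl(1-X(i)w_i\bigr).
\end{align*}
The output is then $y_{k+h}=\mathbf{c}^\top\Phi_{k+h,k}\mathbf{z}_k=V_L(k+h)+V_S(k+h)$. The whole argument reduces to showing that every factor in both products is strictly positive almost surely.

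For the long factors, I use $X(i)\in[X_{\min},X_{\max}]$ a.s., $X_{\min}>-1$, and $0\le w_i\le w_{\max}\le 1$. If $X(i)\ge 0$, then $1+X(i)w_i\ge 1$. If $X(i)<0$, then $1+X(i)w_i\ge 1+X_{\min}w_i\ge \min\{1,1+X_{\min}\}=1+X_{\min}>0$. In both cases
$$1+X(i)w_i\ge 1+X_{\min}>0 \quad \text{a.s.}$$

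For the short factors, I use $w_i\le w_{\max}\le 1/X_{\max}$. If $X(i)\le 0$, then $1-X(i)w_i\ge 1$. If $X(i)>0$, then $1-X(i)w_i\ge 1-X_{\max}w_{\max}\ge 0$. So $1-X(i)w_i\ge 0$ a.s.

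This short-side bound is only non-strict, and that is the main obstacle. When $X_{\max}\ge 1$ we have $w_{\max}=1/X_{\max}$, and the factor can vanish if $X(i)=X_{\max}$ while $w_i=w_{\max}$. The strict positivity therefore has to come from the long account, which is exactly why the hypothesis asks for $V_L(k)>0$ while allowing $V_S(k)\ge 0$.

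To finish, I take a finite intersection over $i=k,\dots,k+h-1$ of the almost-sure events above, which is still an almost-sure event. On it, $V_L(k+h)\ge V_L(k)(1+X_{\min})^h>0$ and $V_S(k+h)\ge 0$. Hence $y_{k+h}\ge V_L(k+h)>0$ a.s. for every $h=1,\dots,H$.

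This gives the constraint \eqref{eq:survivability} automatically, and in its stronger strict form. Conditional independence of the returns is not needed anywhere in the argument; only the almost-sure bounds on $X(i)$ are used.
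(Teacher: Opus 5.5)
Your proposal is correct and takes essentially the same approach as the paper: write both accounts as products over the diagonal transition factors, bound the long factors strictly away from zero and the short factors below by zero, and use $V_L(k)>0$ to get strict positivity of the sum. The only difference is cosmetic: your per-factor bound is $1+X_{\min}$, while the paper uses the slightly tighter $1+w_{\max}X_{\min}$, and either suffices.
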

\begin{proof}
    Recall that $y_{k+h} =V_L(k+h)+V_S(k+h)$ for all $h=1,\dots,H$. Since $w_i \in [0, w_{\max}]$ and $X(i) \in [X_{\min}, X_{\max}]$ a.s., the long account evolves as
    \begin{align*}
        V_L(k+h) 
        &= V_L(k)\prod_{i=k}^{k+h-1}(1 + w_iX(i)) \\ 
        &\geq V_L(k)(1 + w_{\max}X_{\min})^h > 0,
    \end{align*}
    where strict positivity follows from $w_{\max}\leq 1$, $X_{\min}>-1$, and $V_L(k)>0$. 
    Similarly, $w_{\max} \leq \tfrac{1}{X_{\max}}$ and $V_S(k) \geq 0$ yield 
    $
    V_S(k+h)\geq V_S(k)(1 - w_{\max}X_{\max})^h \geq 0.
    $
    The sum of a strictly positive quantity and a non-negative quantity is strictly positive. This holds for any arbitrary $h \in \{1,\dots, H\}.$
\end{proof}

\begin{remark} \rm
    By Lemma~\ref{lemma: predicted survivability}, $y_i > 0$ a.s. is intrinsically guaranteed for the entire trajectory $\{y_{k+1}, \dots, y_{k+H}\}$ for any admissible control sequence $0 \leq w_i \leq w_{\max}$. Consequently, the entire block of survivability constraints~\eqref{eq:survivability} can be dropped without altering the feasible set, reducing the SMPC problem~to:
\begin{align}
    \max_{\{w_i\}_{i=k}^{k+H-1}} \quad & \mathbb{E}_k[y_{k+H}] - \gamma \operatorname{var}_k(y_{k+H}) \label{eq:mv-problem_reduced} \\
    \text{s.t.} \quad 
    & 0 \le w_i \le w_{\max}, \quad  i \in \{k, \dots, k+H-1\}\nonumber\\
    & \mathbb{E}_k[y_{k+H}  - y_k] \geq 0 \label{eq:ppe2}.
\end{align}
 \end{remark}

To explicitly evaluate the objective in \eqref{eq:mv-problem_reduced}, we now derive the analytical expressions for the conditional moments of the predicted account trajectory.

\begin{lemma}[Conditional Moments of Predicted Wealth]\label{lemma: conditional moments}
    Given a prediction horizon $H > 0$, an initial state~$\mathbf{z}_k$, and weight sequence $\{w_i\}_{i=k}^{k+H-1}$, let $
    \Phi_{k+H, k} :=  A_{k+H-1} A_{k+H-2} \cdots A_k
    $ denote the state transition matrix, and define $\mathbf{c}:=[1\; 1]^\top$.
    Then, the conditional expectation and conditional variance of the output $y_{k+H}$ are given by
    \begin{align}
        \mathbb{E}_k[y_{k+H}] 
            &= \mathbf{c}^\top\bar{\Phi}\,\mathbf{z}_k; \label{eq: cond_mean} \\
        \operatorname{var}_k(y_{k+H}) 
        &= \mathbf{z}_k^\top\Sigma\,\mathbf{z}_k,\label{eq: cond_var}
    \end{align}
where $\bar{\Phi}$ is the expected transition matrix defined as 
$$
\bar{\Phi} 
    := \mathbb{E}_k[\Phi_{k+H,k}] = \operatorname{diag}(P^+,\,P^-) \in \mathbb{R}^{2 \times 2}
$$
with $P^\pm:=\prod_{i=k}^{k+H-1}\left(1\pm w_i\mu_i^{(k)}\right)$, and $\Sigma$ is the covariance~matrix 
$$
    \Sigma := M - \bar{\Phi}^\top\,\mathbf{c}\mathbf{c}^\top\bar{\Phi} \in \mathbb{R}^{2\times 2}.
$$
The second-moment matrix $M := \mathbb{E}_k[\Phi_{k+H,k}^\top\mathbf{c}\mathbf{c}^\top\Phi_{k+H,k}]$ 
is symmetric with entries $M_{11} = Q^+$, $M_{22} = Q^-$, and
\[
    M_{12} = M_{21} = \prod_{i=k}^{k+H-1}\!\left(1-w_i^2\left(\left(\mu_i^{(k)}\right)^2+\left(\sigma_i^{(k)}\right)^2\right)\right),
\]
with 
$
Q^\pm:=\prod_{i=k}^{k+H-1}\left((1\pm w_i\mu_i^{(k)})^2+\left(\sigma_i^{(k)}\right)^2 w_i^2\right)
$.

\end{lemma}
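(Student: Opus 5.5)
Since each $A_i$ is diagonal, the transition matrix $\Phi_{k+H,k}$ is diagonal, with entries
\[
L:=\prod_{i=k}^{k+H-1}(1+w_iX(i)), \qquad S:=\prod_{i=k}^{k+H-1}(1-w_iX(i)).
\]
Hence $y_{k+H}=L\,V_L(k)+S\,V_S(k)$. The state $\mathbf{z}_k$ is $\mathcal{F}_k$-measurable, and the weights $\{w_i\}$ are chosen at time $k$, so they are deterministic given $\mathcal{F}_k$. Both can therefore be pulled out of the conditional expectations $\mathbb{E}_k[\cdot]$. This reduces everything to computing $\mathbb{E}_k[L]$, $\mathbb{E}_k[S]$, $\mathbb{E}_k[L^2]$, $\mathbb{E}_k[S^2]$ and $\mathbb{E}_k[LS]$.

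\textbf{Mean.} By the conditional mutual independence of $\{X(i)\}_{i=k}^{k+H-1}$ given $\mathcal{F}_k$, the conditional expectation of a product of functions of distinct $X(i)$ factorizes. This gives $\mathbb{E}_k[L]=\prod_i(1+w_i\mu_i^{(k)})=P^+$ and likewise $\mathbb{E}_k[S]=P^-$. Thus $\mathbb{E}_k[\Phi_{k+H,k}]=\operatorname{diag}(P^+,P^-)=\bar\Phi$, and by linearity $\mathbb{E}_k[y_{k+H}]=\mathbf{c}^\top\bar\Phi\,\mathbf{z}_k$, which is \eqref{eq: cond_mean}. All factors are bounded because $X(i)\in[X_{\min},X_{\max}]$, so integrability is not an issue.

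\textbf{Second moment.} Write $y_{k+H}^2=\mathbf{z}_k^\top\Phi^\top\mathbf{c}\mathbf{c}^\top\Phi\,\mathbf{z}_k$. Since $\Phi$ is diagonal, $\Phi^\top\mathbf{c}\mathbf{c}^\top\Phi$ has entries $L^2$, $S^2$ and $LS$. Taking $\mathbb{E}_k$ entrywise gives $\mathbb{E}_k[y_{k+H}^2]=\mathbf{z}_k^\top M\mathbf{z}_k$. Each entry factorizes again by conditional independence, because $L^2$, $S^2$ and $LS$ are products over $i$ of functions of $X(i)$ alone. For each factor, use $\mathbb{E}_k[X(i)^2]=(\mu_i^{(k)})^2+(\sigma_i^{(k)})^2$:
\begin{align*}
\mathbb{E}_k[(1\pm w_iX(i))^2] &= 1\pm 2w_i\mu_i^{(k)}+w_i^2\bigl((\mu_i^{(k)})^2+(\sigma_i^{(k)})^2\bigr) = (1\pm w_i\mu_i^{(k)})^2+w_i^2(\sigma_i^{(k)})^2,\\
\mathbb{E}_k[(1+w_iX(i))(1-w_iX(i))] &= 1-w_i^2\bigl((\mu_i^{(k)})^2+(\sigma_i^{(k)})^2\bigr).
\end{align*}
Taking products over $i$ yields $M_{11}=Q^+$, $M_{22}=Q^-$ and $M_{12}=M_{21}$ as stated. $M$ is symmetric because $\Phi^\top\mathbf{c}\mathbf{c}^\top\Phi$ is symmetric pointwise.

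\textbf{Variance.} We have $\operatorname{var}_k(y_{k+H})=\mathbb{E}_k[y_{k+H}^2]-(\mathbb{E}_k[y_{k+H}])^2$. The square of the mean is $(\mathbf{c}^\top\bar\Phi\mathbf{z}_k)^2=\mathbf{z}_k^\top\bar\Phi^\top\mathbf{c}\mathbf{c}^\top\bar\Phi\,\mathbf{z}_k$. Subtracting gives $\mathbf{z}_k^\top(M-\bar\Phi^\top\mathbf{c}\mathbf{c}^\top\bar\Phi)\mathbf{z}_k=\mathbf{z}_k^\top\Sigma\mathbf{z}_k$, which is \eqref{eq: cond_var}.

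The only delicate step is justifying the factorization of conditional expectations of products. This needs two things. First, the weights must be $\mathcal{F}_k$-measurable, which holds because the open-loop plan is fixed at time $k$. Second, we need conditional mutual independence, which is the paper's standing assumption. The rest is routine algebra.
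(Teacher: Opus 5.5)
Your proposal is correct and follows essentially the same route as the paper. Both arguments observe that $\Phi_{k+H,k}$ is diagonal, factorize each entry of $\bar{\Phi}$ and $M$ using conditional mutual independence, and then subtract the squared mean written as the quadratic form $\mathbf{z}_k^\top\bar{\Phi}^\top\mathbf{c}\mathbf{c}^\top\bar{\Phi}\,\mathbf{z}_k$. Your explicit remarks on the $\mathcal{F}_k$-measurability of the planned weights and on boundedness (for integrability) are small gains in rigor that the paper leaves implicit.
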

\begin{proof}
    The proof follows from the conditional independence of the return sequence and the property of the variance operator. The full derivation is deferred to the~\nameref{appendix}.
\end{proof}

\section{Solving the Stochastic MPC Problem}\label{section: solving the SMPC}
As established in Lemma~\ref{lemma: conditional moments}, the multiplicative dependence of the state transition matrices on the control sequence $\mathbf{w}$ renders the mean-variance objective function highly non-convex. To this end, we employ L-BFGS-B~\cite{byrd1995limited}, a limited-memory quasi-Newton method that approximates second-order curvature, while enforcing the box constraint $0 \leq w_i \leq w_{\max}$.

\subsection{Analytical Gradient Derivation}
The computational bottleneck in applying quasi-Newton methods to nonlinear receding-horizon problems is typically the evaluation of the gradient. 
Crucially, Theorem~\ref{theorem: gradient} below provides the exact analytical gradient~$\nabla_{\mathbf{w}}J(\mathbf{w})$. This closed-form expression avoids finite-difference approximations and enables exact, computationally efficient gradient evaluations at each iteration.

\begin{theorem}[Analytical Gradient of the Objective Function] \label{theorem: gradient}
For $H>0$, let $J(\mathbf{w}) := \mathbb{E}_k[y_{k+H}] - \gamma \operatorname{var}_k(y_{k+H})$ and 
$D:=\operatorname{diag}(1, -1)$. For each $j=0,\dots, H-1$, define 
$\bar{\Phi}_j:= \operatorname{diag}(A_j^+, A_j^-) \in \mathbb{R}^{2\times 2}$ 
where $A_j^\pm:=\prod_{i\in\widetilde{I}_j}\left(1\pm\mu_i^{(k)} w_i\right)$ and 
$\widetilde{I}_j:=\{k,\dots,k+H-1\}\setminus\{k+j\}$. With the matrix $M$ defined in Lemma~\ref{lemma: conditional moments}, define
\[
    {M}'_j := \frac{\partial M}{\partial w_{k+j}} = \begin{bmatrix} {m'}^+_j & {m'}^{LS}_j \\ {m'}^{LS}_j & {m'}^-_j \end{bmatrix}\in \mathbb{R}^{2\times 2},
\]
 whose entries are
\begin{align*}
    {m'}^+_j &:= 2\left(\mu_{k+j}^{(k)}\left(1+\mu_{k+j}^{(k)} w_{k+j}\right)+\left(\sigma_{k+j}^{(k)}\right)^2 w_{k+j}\right)B_j^+,\\
    {m'}^-_j &:= 2\left(-\mu_{k+j}^{(k)}\left(1-\mu_{k+j}^{(k)} w_{k+j}\right)+\left(\sigma_{k+j}^{(k)}\right)^2 w_{k+j}\right)B_j^-,\\
    {m'}^{LS}_j &:= -2\left(\left(\mu_{k+j}^{(k)}\right)^2+\left(\sigma_{k+j}^{(k)}\right)^2\right)w_{k+j}C_j,
\end{align*}
where 
    $B_j^\pm
    :=\prod_{i\in\widetilde{I}_j}\bigl((1\pm\mu_i^{(k)} w_i)^2+ \left(\sigma_i^{(k)}\right)^2 w_i^2\bigr)$ 
and 
    $
    C_j
    :=\prod_{i\in\widetilde{I}_j}\left(1-\left(\left(\mu_i^{(k)}\right)^2+\left(\sigma_i^{(k)}\right)^2\right)w_i^2\right)
    $. 
Then, for $j=0,\dots, H-1$, the analytical partial derivative of $J(\mathbf{w})$ with respect to $w_{k+j}$ is given by
\begin{align*}
    \frac{\partial }{\partial w_{k+j}} J(\mathbf{w})
    &= \mu_{k+j}^{(k)}\,\mathbf{c}^\top D\bar{\Phi}_j\,\mathbf{z}_k 
       - \gamma\,\mathbf{z}_k^\top\Sigma'_j\,\mathbf{z}_k,   
\end{align*}
where $\Sigma'_j := {M}'_j - \mu_{k+j}^{(k)}\bigl(D\bar{\Phi}_j\,\mathbf{c}\mathbf{c}^\top\bar{\Phi} 
+ \bar{\Phi}\,\mathbf{c}\mathbf{c}^\top D\bar{\Phi}_j\bigr)\in\mathbb{R}^{2\times 2}.$
\end{theorem}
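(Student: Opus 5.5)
We start from Lemma~\ref{lemma: conditional moments}, which writes the objective in closed form:
\[
J(\mathbf{w}) = \mathbf{c}^\top\bar{\Phi}\,\mathbf{z}_k - \gamma\,\mathbf{z}_k^\top\bigl(M - \bar{\Phi}^\top\mathbf{c}\mathbf{c}^\top\bar{\Phi}\bigr)\mathbf{z}_k .
\]
Here $\mathbf{z}_k$ is $\mathcal{F}_k$-measurable and the conditional moments $\mu_i^{(k)},\sigma_i^{(k)}$ do not depend on $\mathbf{w}$. So $J$ is a deterministic polynomial in $w_k,\dots,w_{k+H-1}$, and differentiation passes straight through $\mathbf{z}_k$. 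The plan is to differentiate each factor entrywise with respect to $w_{k+j}$. Each factor is a product over $i\in\{k,\dots,k+H-1\}$ in which $w_{k+j}$ appears in exactly one term. We therefore use the elementary rule
\[
\frac{\partial}{\partial w_{k+j}}\prod_i f_i(w_i) = f_{k+j}'(w_{k+j})\prod_{i\in\widetilde I_j} f_i(w_i),
\]
which holds without dividing by $f_{k+j}$ and so needs no nonvanishing assumption.

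\textbf{Mean term.} Applying this rule to $P^\pm=\prod_i(1\pm w_i\mu_i^{(k)})$ gives $\partial P^\pm/\partial w_{k+j} = \pm\mu_{k+j}^{(k)}A_j^\pm$. Hence
\[
\frac{\partial\bar{\Phi}}{\partial w_{k+j}} = \mu_{k+j}^{(k)}\operatorname{diag}(A_j^+,-A_j^-) = \mu_{k+j}^{(k)}D\bar{\Phi}_j .
\]
Since $D$ and $\bar{\Phi}_j$ are both diagonal, they commute, so $D\bar{\Phi}_j=\bar{\Phi}_jD$; this lets us place $D$ on either side later. The derivative of the mean term is then $\mu_{k+j}^{(k)}\mathbf{c}^\top D\bar{\Phi}_j\mathbf{z}_k$.

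\textbf{Second-moment term.} We differentiate the three distinct entries of $M$.
\begin{itemize}
\item For $Q^\pm$, the factor at index $k+j$ is $(1\pm\mu w)^2+\sigma^2w^2$. Its derivative is $2(\pm\mu(1\pm\mu w)+\sigma^2 w)$, which multiplied by $B_j^\pm$ gives ${m'}^\pm_j$.
\item For $M_{12}$, the factor is $1-(\mu^2+\sigma^2)w^2$. Its derivative is $-2(\mu^2+\sigma^2)w$, which multiplied by $C_j$ gives ${m'}^{LS}_j$.
\end{itemize}
This confirms $M'_j=\partial M/\partial w_{k+j}$ as stated.

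\textbf{Variance correction.} By the product rule and the symmetry of the diagonal matrix $\bar{\Phi}$,
\[
\frac{\partial}{\partial w_{k+j}}\bigl(\bar{\Phi}^\top\mathbf{c}\mathbf{c}^\top\bar{\Phi}\bigr) = \mu_{k+j}^{(k)}\bigl(D\bar{\Phi}_j\mathbf{c}\mathbf{c}^\top\bar{\Phi} + \bar{\Phi}\mathbf{c}\mathbf{c}^\top D\bar{\Phi}_j\bigr).
\]
Here we used $(D\bar{\Phi}_j)^\top = D\bar{\Phi}_j$. Subtracting this from $M'_j$ gives $\Sigma'_j = \partial\Sigma/\partial w_{k+j}$. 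Since $\mathbf{z}_k$ is fixed, $\partial(\mathbf{z}_k^\top\Sigma\mathbf{z}_k)/\partial w_{k+j} = \mathbf{z}_k^\top\Sigma'_j\mathbf{z}_k$, and combining the two pieces yields the stated formula.

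There is no real analytical obstacle here. The main care point is bookkeeping: the transpose and ordering in the outer-product derivative, the sign conventions encoded by $D$, and the fact that $M'_j$ is symmetric because $M$ is. A secondary check is that the entries of $M$ in Lemma~\ref{lemma: conditional moments} are exactly these products, which we take as given.
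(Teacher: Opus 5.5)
Your proposal is correct and follows essentially the same route as the paper's proof: you obtain $\partial\bar{\Phi}/\partial w_{k+j} = \mu_{k+j}^{(k)}D\bar{\Phi}_j$ by isolating the single factor containing $w_{k+j}$, apply the product rule to $\bar{\Phi}^\top\mathbf{c}\mathbf{c}^\top\bar{\Phi}$ using the symmetry of the diagonal matrices, and compute the entries of $M'_j$ factor by factor. Your remarks that $\mathbf{z}_k$ and the conditional moments do not depend on $\mathbf{w}$, and that no division by $1\pm\mu_{k+j}^{(k)}w_{k+j}$ is needed, are small clarifications that the paper leaves implicit.
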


\begin{proof}
    The derivation requires applying the product rule to the conditional moments established in Lemma~\ref{lemma: conditional moments}; see the~\nameref{appendix} for the complete derivation.
\end{proof}

\subsection{Handling the Positive Expectation Constraint}
While L-BFGS-B natively handles the box constraints $0 \leq w_i \leq w_{\max}$, the nonlinear positive expectation constraint~\eqref{eq:ppe2} requires an outer penalty framework.
To address this, we employ an Augmented Lagrangian (AL) method~\cite{nocedal2006numerical}. At each time step~$k$, the constrained maximization problem is converted to a sequence of box-constrained subproblems by augmenting the objective with a quadratic penalty and a dual variable $\lambda \geq 0$. Specifically, we define the AL objective as
 {\small
 \[
J_{\rho,\lambda}(\mathbf{w}) = \mathbb{E}_k[y_{k+H}] - \gamma\operatorname{var}_k(y_{k+H}) {\color{blue} -} \frac{\rho}{2}\max\!\left(0,\,-h(\mathbf{w})+\frac{\lambda}{\rho}\right)^2,
\]
}where $h(\mathbf{w}):=\mathbb{E}_k[y_{k+H} - y_k]$ represents the expected gain. 

Each augmented subproblem is solved via \mbox{L-BFGS-B}, after which the dual variable is updated as $\lambda\leftarrow\max(0,\,\lambda-\rho\, h(\mathbf{w}^\star))$, and the penalty parameter $\rho$ is doubled if the constraint violation exceeds a predefined tolerance. Since~$\nabla_\mathbf{w} h(\mathbf{w})=\nabla_\mathbf{w}\mathbb{E}_k[y_{k+H}]$ is available in closed form; see Theorem~\ref{theorem: gradient}, exact analytical gradient evaluation is preserved throughout the optimization.

By approximating second-order curvature, L-BFGS-B typically achieves a faster convergence rate than standard first-order projected-gradient methods, making it well-suited for real-time receding-horizon execution. The complete procedure is outlined in Algorithms~\ref{alg:lbfgsb} and~\ref{alg:stochastic-mpc}.

\begin{algorithm}[htbp]
\small
\caption{L-BFGS-B, \cite{byrd1995limited}}
\label{alg:lbfgsb}
\textbf{Input:} $J$, $\nabla J$, $w_{\max}$, $H$, $\mathbf{w}^{(0)}$, $m$, $\varepsilon$, $K_{\max}$.\\
\textbf{Output:} $\mathbf{w}^\star$.
\begin{algorithmic}[1]
\STATE Initialize $\mathbf{w}\leftarrow\mathbf{w}^{(0)}$,
       $(\mathcal{S},\mathcal{Y})\leftarrow\emptyset$,
       $H_0=I$, $k\leftarrow 0$.
\REPEAT
    \STATE $\mathbf{g}\leftarrow\nabla J(\mathbf{w})$.
    \STATE \textbf{Generalized Cauchy Point.}
           Trace the piecewise-linear path
           $\mathbf{x}(t)=\Pi_{[0,w_{\max}]^{H}}(\mathbf{w}-t\,\mathbf{g})$;
           sort breakpoints $\{t_i\}$ where coordinate $i$ hits a bound, and advance along each segment, minimizing the quadratic model
           $$q(\mathbf{x})=J(\mathbf{w})+\mathbf{g}^\top(\mathbf{x}-\mathbf{w})
           +\tfrac{1}{2}(\mathbf{x}-\mathbf{w})^\top B_k(\mathbf{x}-\mathbf{w})$$
           using the compact L-BFGS representation of $B_k$,
           until $q$ increases or all breakpoints are exhausted.
           Denote the result~$\mathbf{w}^c$.
    \STATE Partition indices:
           $\mathcal{A}_k=\{i: w^c_i\in\{0,w_{\max}\}\}$,\quad
           $\mathcal{B}_k=\{1,\ldots,H\}\setminus\mathcal{A}_k$.
    \IF{$\|\mathbf{g}_{_{\mathcal{B}_k}}\|_\infty \le \varepsilon$}
        \STATE \textbf{break}
    \ENDIF
    \STATE \textbf{Subspace Minimization.}
           Starting from $\mathbf{w}^c$, minimize $q$ over $\mathcal{B}_k$
           with $\mathcal{A}_k$ coordinates fixed,
           via the reduced compact L-BFGS system;
           project onto $[0,w_{\max}]^H$ to obtain $\mathbf{w}^+$.
    \STATE $\mathbf{s}_k\leftarrow\mathbf{w}^+-\mathbf{w}$,\quad
           $\mathbf{y}_k\leftarrow\nabla J(\mathbf{w}^+)-\mathbf{g}$.
    \IF{$\mathbf{s}_k^\top\mathbf{y}_k>0$}
        \STATE Append $(\mathbf{s}_k,\mathbf{y}_k)$ to $(\mathcal{S},\mathcal{Y})$;
               evict oldest pair if $|\mathcal{S}|>m$.
        \STATE $H_0^{(k+1)}\leftarrow
               \tfrac{\mathbf{s}_k^\top\mathbf{y}_k}{\|\mathbf{y}_k\|^2}\,I$.
    \ENDIF
    \STATE $\mathbf{w}\leftarrow\mathbf{w}^+$,\quad $k\leftarrow k+1$.
\UNTIL{$k \ge K_{\max}$}
\RETURN $\mathbf{w}^\star\leftarrow\mathbf{w}$.
\end{algorithmic}
\end{algorithm}

\begin{algorithm}[htbp]
\small
\caption{DLP--SMPC (Receding Horizon Execution) }
\label{alg:stochastic-mpc}
    \textbf{Input:} Prediction Horizon $H$, total simulation steps $T$, risk parameter $\gamma > 0$, weight bound $w_{\max}$,
    initial states $V_L(0), V_S(0)$, tolerance $\varepsilon$, max iterations $K_{\max}$,
    memory $m$, AL parameters $\rho>0$, $\lambda_0\geq 0$, \texttt{al\_maxiter}, \texttt{al\_tol}.

    \textbf{Notation:} $J_{\rho,\lambda}(\mathbf{w})$ is the augmented Lagrangian objective; $h(\mathbf{w}):=\mathbb{E}_k[V(k+H)]-V(k)$.
    \begin{algorithmic}[1]
    \FOR{$k = 0,1,2,\dots,T-1$}
        \STATE Update rolling estimates $\widehat\mu_k$, $\widehat\sigma_k^2$.
        \STATE Set $\mathbf{w}^{(0)}\in[0,w_{\max}]^H$, $\lambda\leftarrow\lambda_0$.
        \FOR{$\ell = 1,\dots,$ \texttt{al\_maxiter}}
            \STATE Solve inner subproblem via Algorithm~\ref{alg:lbfgsb} (implemented computationally by minimizing the negative objective~$-J_{\rho,\lambda}$):
            \[
                \mathbf{w}^\star \leftarrow \underset{\mathbf{w}\in[0,w_{\max}]^H}{\arg\max}\; J_{\rho,\lambda}(\mathbf{w})
            \]
            with initial point $\mathbf{w}^{(0)}$, memory $m$, tolerance $\varepsilon$, max iterations $K_{\max}$.
            \STATE Dual update: $\lambda\leftarrow\max(0,\,\lambda - \rho\,h(\mathbf{w}^\star))$.
            \IF{$\max(0,-h(\mathbf{w}^\star)) < $ \texttt{al\_tol}} \STATE \textbf{break} \ENDIF
            \STATE $\rho\leftarrow 2\rho$;\quad $\mathbf{w}^{(0)}\leftarrow\mathbf{w}^\star$.
        \ENDFOR
        \STATE Apply receding horizon: $w_k^*\leftarrow\mathbf{w}^\star_1$.
        \STATE $V_L(k+1)\leftarrow V_L(k)(1+w_k^*X(k))$,\\
                $V_S(k+1)\leftarrow V_S(k)(1-w_k^*X(k))$.
    \ENDFOR
    \STATE \textbf{return} $\{V_L(k),V_S(k),w_k^*\}_{k=0}^T$
    \end{algorithmic}
\end{algorithm}

\section{Empirical Illustrations}\label{section: empirical results}
This section presents illustrative examples that are backtested against historical data. Throughout this section, we initialize the long and short accounts symmetrically as~$V_L(0):= V_S(0) = \frac{1}{2}V(0)$ with $V(0):= \$100$. 
At each time step $k$, we estimate the sample mean and variance parameter pair ($\widehat\mu_k$, $\widehat\sigma_k^2$) using rolling sample statistics computed over a predefined window of the most recent $L > 1$ observations. That is,
$$
    \widehat{\mu}_k := \frac{1}{L}\sum_{i=1}^{L} X_{k-i}
\; \text{ and }  \;   
    \widehat{\sigma}_k^2 := \frac{1}{L-1}\sum_{i=1}^{L} (X_{k-i} - \widehat{\mu}_k)^2.
$$
In the empirical illustration below, these estimates are used uniformly across the prediction horizon, i.e., $\mu_i^{(k)} := \widehat{\mu}_k$ and $\bigl(\sigma_i^{(k)}\bigr)^2:= \widehat{\sigma}_k^2$ for $i = k, \dots, k+H-1$.
We evaluate the proposed DLP--SMPC approach, solved using L-BFGS-B, on daily closing prices of Bitcoin (Ticker: \texttt{BTC-USD}) obtained from Yahoo Finance. The sample period spans from~2019-12-31 to 2025-12-31, covering six years of trading data. This interval corresponds to broadly bullish, volatile price movement.  We compare the proposed DLP--SMPC approach against three classes of benchmarks: $(i)$ a constant-weight DLP benchmark with  $ w(k) \equiv  0.51 $ (selected via cross-validation), $(ii)$ a buy-and-hold strategy, and $(iii)$ three prescribed time-varying weight functions previously considered in \cite{wang2023robustness}. Letting~$N$ denote the total number of trading days in the sample period, these benchmark weight functions are 
\begin{align*}
    w_1(k) 
        &:= \log\left(1+\tfrac{k}{N} (e-1) \right) \\
    w_2(k) 
        &:= \tfrac{1}{2} \left( \sin \left( \tfrac{1}{\tfrac{0.02}{N}k - 0.01}\right)+1\right)\\
    w_3(k) 
        &:= f(k)\sin\left(\tfrac{1}{f(k)}\right)\mathds{1}_{\{f(k)\sin\left(\tfrac{1}{f(k)}\right)\ge 0\}}(k),
\end{align*}
with $f(k) := \left(\tfrac{4}{N}k - 2 \right)$.\footnote{
    According to \cite{wang2023robustness}, these three benchmark weighting functions serve as proxies for distinct investment philosophies. In particular, $w_1(\cdot)$ represents a monotonically increasing exposure, $w_2(\cdot)$ represents a highly active, oscillatory rebalancing strategy, and $w_3(\cdot)$ represents investing more at the beginning and end of the period, maintaining near-zero market exposure in the middle.  Note that to ensure strict mathematical well-posedness, any algebraic singularity occurring at exactly $k = N/2$ is resolved by taking the continuous limit of the respective functions.
}

Furthermore, we compare the proposed method with several alternative global optimization heuristics: Simulated Annealing~\cite{SA}, Differential Evolution~\cite{Storn1997DifferentialE}, and Basin Hopping~\cite{wales1997global}. All experiments use a training period of 2018-01-01 to 2019-12-31 and a held-out test period of 2020-01-01 to 2025-12-31. Hyperparameters for all methods are selected via cross-validation on the training period.\footnote{
    Hyperparameters for each global optimizer benchmark are selected via cross-validation where applicable. For Simulated Annealing, we use a geometric cooling schedule $T_{k+1} = \beta T_k$ with $\beta \in [0.90, 0.99]$, with initial temperature calibrated to yield an acceptance rate of $0.6$--$0.8$. For Basin Hopping, we use L-BFGS-B as the optimizer, with $1000$ iterations, an acceptance temperature $T = 5.0$, and a step size of $0.3$. For Differential Evolution, we set crossover probability $\mathrm{CR} \approx 0.3$, increasing to $[0.8, 1]$ if convergence stalls, and balance population size $\mathrm{NP}$ against mutation factor $F$ accordingly; see \cite{Storn1997DifferentialE} for parameter selection rule of thumb studied for Differential Evolution.
} The parameters $(\gamma, H, L)$ for DLP--SMPC are selected via uniform grid search over the candidate space $\gamma \in [0.1, 1]$, $H \in {2, 3, \dots, 50}$, and~$L \in {5, 6, \dots, 60}$. The triplet $(\gamma, H, L)$ reported in Figure~\ref{fig:btc_result} is the configuration that achieves the best cross-validation performance in terms of the mean-variance criterion.

\begin{figure}[htbp]
    \centering
    \includegraphics[width=0.9\linewidth]{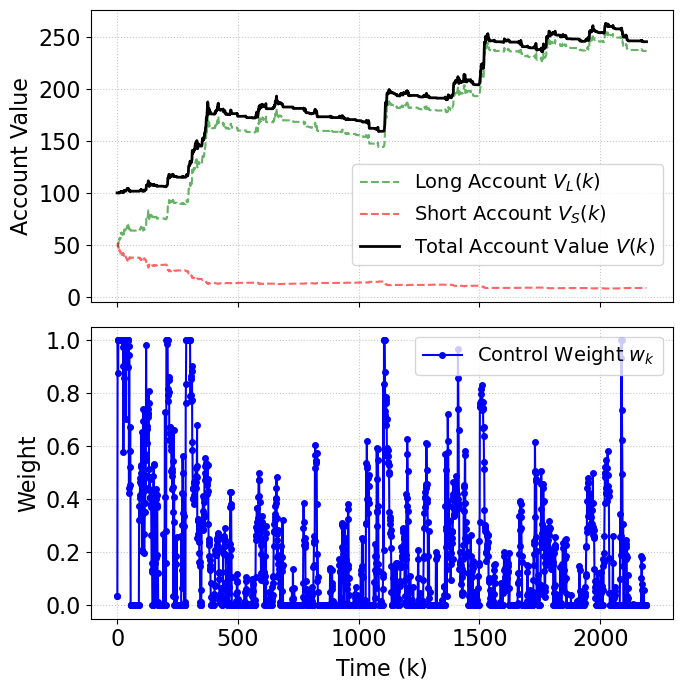}
    \caption[DLP--SMPC on BTC]{
   DLP--SMPC applied to the \texttt{BTC-USD} with cross-validated parameters, $\gamma = 0.1$, $H = 29$, and $L = 18$. 
   (Top): accounts value dynamics; 
   (Bottom) control-weight trajectory over time.}
    \label{fig:btc_result}
\end{figure}

Figure~\ref{fig:btc_result} shows that the proposed DLP--SMPC approach exhibits a step-like wealth trajectory, with a pattern of upward jumps separated by relatively flat periods. 
Table~\ref{performance} shows that DLP--SMPC achieves the highest annualized Sharpe ratio (1.388) and the lowest maximum drawdown (17.63\%) among all reported strategies. Its total return of 145.03\% is substantially lower than that of buy-and-hold (1129.29\%), but only marginally higher than the DLP with constant-weight~(139.68\%). This highlights a key distinction: while buy-and-hold achieves extreme returns, it does so at the cost of severe drawdowns (76.63\%), whereas the DLP with constant weight delivers neither competitive returns nor strong risk control (41.83\% drawdown). In contrast, DLP--SMPC attains an improved risk-adjusted profile, balancing moderate returns with substantially reduced downside risk.

Relative to the pre-defined time-varying strategies $w_i(k)$, DLP--SMPC consistently delivers superior risk-adjusted performance. Although $w_3(k)$ achieves a higher total return~(477.01\%), it incurs a much larger drawdown (46.05\%) and a lower Sharpe ratio (0.945), indicating an inferior risk--return trade-off. The remaining strategies, $w_1(k)$ and $w_2(k)$, are dominated by DLP--SMPC across both return and risk-adjusted metrics. When comparing global optimization methods within the SMPC framework, performance differences are relatively modest. Basin Hopping achieves the highest Sharpe ratio (1.388) and lowest drawdown (17.63\%), while Simulated Annealing and Differential Evolution yield slightly weaker but comparable results.

\begin{table}[htbp]
\footnotesize
\setlength{\tabcolsep}{4pt}
\renewcommand{\arraystretch}{0.85}
    \centering
    \caption{Trading Performance Summary}
    \label{performance}
    
    \newcolumntype{Y}{>{\raggedleft\arraybackslash}X}

    \begin{tabularx}{\linewidth}{l YYY}
    \toprule
    Metric & DLP-SMPC & DLP-Constant & Buy and Hold \\
    \midrule
    Total Return (\%)             
    & {145.03} & 139.68 & \textbf{1129.29} \\
    & {(130.17)} & (139.56) & \textbf{(1129.29)} \\

    Sharpe Ratio (Annualized)    
    & \textbf{1.39} & 0.85 & 0.996 \\
    & \textbf{(1.27)} & (0.85) & (0.996) \\

    Maximum Drawdown (\%)               
    & \textbf{17.63} & 41.83 & 76.63 \\
    & \textbf{(19.45)} & (41.83) & (76.63) \\

    Sortino Ratio (Annualized)  
    & \textbf{1.64} & 1.19 & 1.34 \\
    & \textbf{(1.59)} & (1.19) & (1.34) \\
    \bottomrule
    \end{tabularx}

    \begin{tabularx}{\linewidth}{l YYY}
    \toprule
    Metric & $w_1(\cdot)$ & $w_2(\cdot)$ & $w_3(\cdot)$ \\
    \midrule
    Total Return \%                 
    & 83.22 & 145.33 & {477.01} \\
    & (83.04) & (58.57) & {(474.53)} \\

    Sharpe Ratio (Annualized)       
    & 0.56 & 0.68 & {0.95} \\
    & (0.56) & (0.42) & {(0.94)} \\ 

    Maximum Drawdown \%                 
    & {30.05} & 33.86 & 46.05 \\ 
    & {(30.05)} & (39.12) & (46.06) \\

    Sortino Ratio (Annualized)  
    & 0.72 & 0.85 & {1.23} \\
    & (0.72) & (0.54) & {(1.24)} \\
    \bottomrule
    \end{tabularx}

\begin{tabularx}{\linewidth}{l YYY}
\toprule
Metric & Simulated Annealing & Differential Evolution & Basin Hopping \\
\midrule
Total Return (\%)              
& 115.37 & 139.29 & {145.03} \\
& (112.50) & (122.63) & {(129.26)} \\

Sharpe Ratio (Annualized)    
& 1.31 & 1.27 & {1.39} \\
& {(1.25)} & (1.14) & (1.24) \\

Maximum Drawdown (\%)               
& 18.10 & 18.02 & {17.63} \\
& (19.46) & (20.40) & {(19.46)} \\

Sortino Ratio (Annualized)  
& 1.56 & 1.67 & \textbf{1.67} \\
& (1.57) & (1.55) & {(1.58)} \\
\bottomrule
\end{tabularx}
    
    \smallskip
    {\footnotesize {Note: Parentheses denote results evaluated under the reduced-form control-adjusted cost model ($\varepsilon = 0.1\%$).}}
\end{table}

\subsection{Impact of Transaction Frictions}
We further investigate the DLP--SMPC formulation under a \emph{reduced-form} proxy for turnover costs, in which a proportional cost at rate $\varepsilon \in [0,1]$ is imposed on control adjustment at each rebalancing instance. 
Specifically, the cost is modeled as $\varepsilon \, | \Delta w_k | \, V_L(k)$ for the long account and $\varepsilon \, | \Delta w_k | \, V_S(k)$ for the short account, where $\Delta w_k:=  w_{k} - w_{k-1}$ denotes the change in the control weight.\footnote{ 
To guarantee one-step survivability under the reduced-form cost model, we redefine the maximum admissible weight as
$
w_{\max} := \min\!\left\{ \tfrac{1}{\varepsilon - X_{\min}},\, \tfrac{1}{X_{\max}+\varepsilon} \right\}.
$ This follows from the worst-case inequalities $1 \pm w_k X(k) -\varepsilon |\Delta w_k| >0$ together with the bound $|\Delta w_k| \leq w_{\max}.$ 
}Below, we set $\varepsilon = 0.10\%$ per trade, consistent with Binance's baseline spot trading fee scheduled for regular users; see \cite{binance_fee}.



Figure~\ref{fig:dlp_comparison} reveals a clear risk--return trade-off across all strategies. The buy-and-hold strategy remains fully exposed to BTC volatility, reaching \$1229 but suffering a drawdown exceeding~75\% around $k \approx 500$--$1000$, whereas DLP--SMPC produces a markedly smoother trajectory, terminating at approximately \$245 by dynamically reducing exposure during adverse periods. 

The predefined weight functions exhibit a similar divergence in performance: the aggressive strategy $w_3(k)$ achieves a higher terminal account value (approximately \$575) at the cost of large mid-sample drawdowns. In contrast, the DLP with constant weight terminates at a level comparable to DLP--SMPC but experiences significantly larger drawdowns. The remaining strategies, $w_1(k)$ and $w_2(k)$, underperform DLP--SMPC outright, terminating at lower values (approximately \$159--\$183). 
Collectively, these results confirm that the comparatively lower total return of DLP--SMPC is a direct consequence of trading upside capture for strict downside protection, which is a risk-adjusted balance that none of the predefined baselines replicate.

\begin{figure}[htbp]
    \centering
    \begin{subfigure}{0.95\linewidth}
        \centering
        \includegraphics[width=\linewidth]{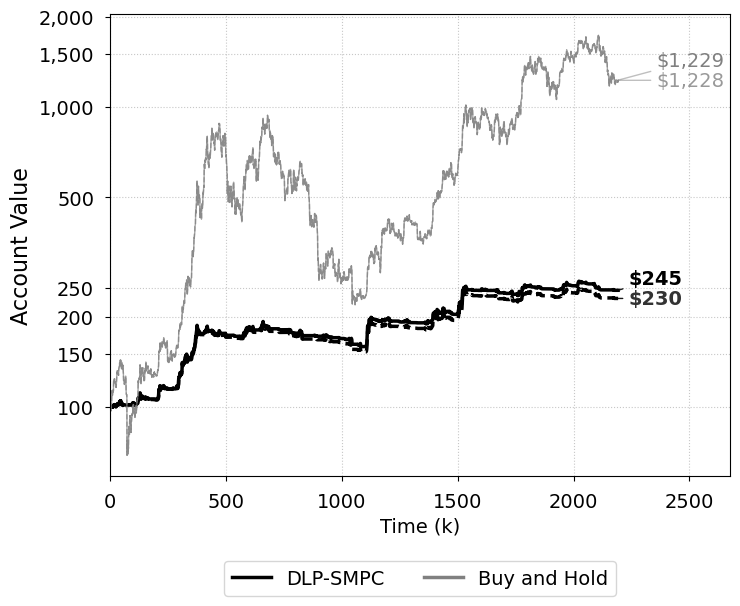}
        \caption{DLP--SMPC vs Buy-and-Hold}
    \end{subfigure}


    \begin{subfigure}{0.95\linewidth}
        \centering
        \includegraphics[width=\linewidth]{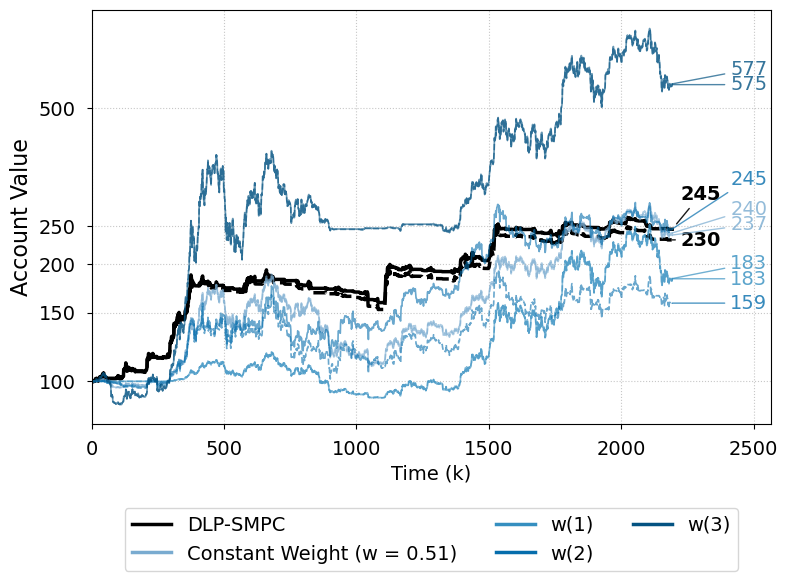}
        \caption{DLP--SMPC vs Different Weight Functions}
    \end{subfigure}


    \begin{subfigure}{0.95\linewidth}
        \centering
        \includegraphics[width=\linewidth]{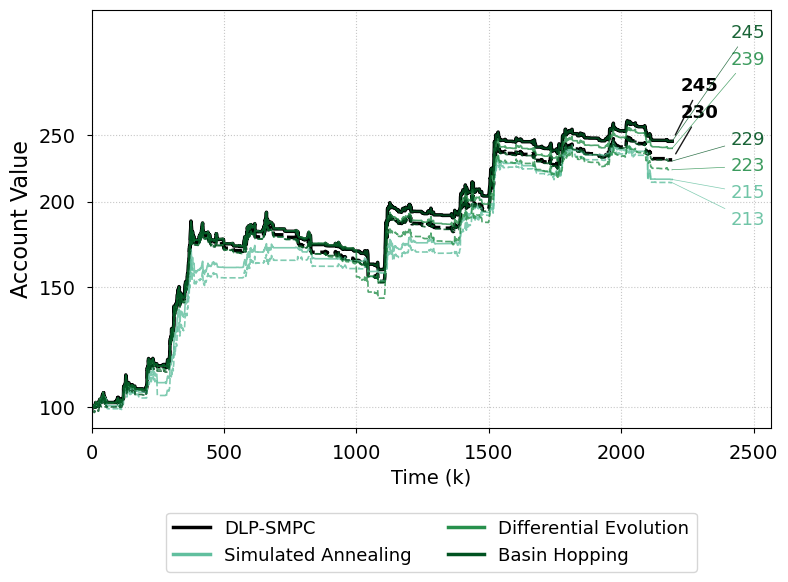}
        \caption{DLP--SMPC vs Global Optimization Algorithms}
    \end{subfigure}

    \caption{Performance comparison of the proposed DLP--SMPC with L-BFGS-B strategy against benchmark strategies. Dashed trajectories represent the integration of transaction cost $\varepsilon = 0.1\%$. Account values are displayed on a log scale.}
    \label{fig:dlp_comparison}
\end{figure}

Shifting to computational robustness, account value trajectories across all global optimizer variants are nearly indistinguishable over the entire sample period, with terminal values tightly clustered between \$213--\$245. 
Although not plotted here to conserve space, pointwise differences in weight trajectories between the algorithms are correspondingly small:
Basin Hopping deviates from L-BFGS-B by at most $\mathcal{O}(10^{-4})$, Differential Evolution exhibits persistent deviations on the order of $10^{-2}$, and Simulated Annealing shows larger initial deviations (up to $0.75$) before rapidly converging to near-zero differences. 
This close agreement between DLP--SMPC with L-BFGS-B and the global optimization methods suggests that, despite the non-convexity of the DLP--SMPC objective for $H \geq 2$, the proposed L-BFGS-B optimizer consistently identifies solutions of comparable quality to those obtained via global search, justifying its use as the primary solver without incurring the additional computational cost of global~methods.


\subsection{Cross-Asset Robustness Check}
To further assess robustness across asset classes beyond the Bitcoin study, we evaluate DLP--SMPC against constant-weight DLP and buy-and-hold on an additional set of assets, including Tesla (Ticker: \texttt{TSLA}), Ethereum (Ticker:~\texttt{ETH-USD}), and Apple Inc. (Ticker: \texttt{AAPL}). For each asset, we again report the performance metrics: Total Return, annualized Sharpe Ratio, Sortino Ratio, and Maximum Drawdown. The hyperparameters~$(\gamma, H, L)$ are selected via the same cross-validation protocol and data partition as in the Bitcoin experiment.

Across all three assets, DLP--SMPC consistently achieves the strongest risk-adjusted performance, outperforming both baselines on the Sharpe and Sortino ratios. This improvement is accompanied by a substantial reduction in downside risk: maximum drawdowns are contained within 10--13\% under DLP--SMPC, compared to 26--52\% for the DLP with constant weight and 33--79\% for buy-and-hold, representing a reduction of approximately 2--6$\times$ in peak-to-trough losses.

While buy-and-hold achieves the highest total returns across all assets (e.g., 1529.44\% for \texttt{TSLA} and 2192.57\% for \texttt{ETH-USD}), these gains are obtained at the cost of extreme volatility and severe drawdowns. The DLP with constant weight occupies an intermediate position: it reduces drawdown relative to buy-and-hold but still incurs substantial losses (up to 52.38\%) and delivers inferior risk-adjusted performance. In particular, its Sharpe and Sortino ratios are dominated by DLP--SMPC.
Overall, these results indicate that the performance of the proposed DLP--SMPC framework is consistent across asset classes, delivering downside protection while maintaining competitive returns.

\begin{table}[htbp]
    \footnotesize
    \setlength{\tabcolsep}{4pt}
    \renewcommand{\arraystretch}{0.9}
    \centering
    \caption{Performance across assets (parentheses denote transaction cost increment, $\varepsilon = 0.1\%$)}
    \label{tab:multi_asset}

\newcolumntype{Y}{>{\raggedleft\arraybackslash}X}

\begin{tabularx}{\linewidth}{l YYY }
\toprule
Metric & DLP-SMPC & DLP-Constant & Buy-and-Hold \\
\midrule

\multicolumn{4}{l}{\texttt{TSLA}} \\
Total Return (\%)      & 194.13 & 156.16 & \textbf{1529.44} \\
                        &(174.03)& (156.04) & (1529.44)\\
Sharpe Ratio (Annualized)          & \textbf{1.42} & 0.94 & 1.03  \\
                        &(1.314)        & (0.94) & (1.03)\\
Maximum Drawdown (\%)      & \textbf{12.17} & 37.37 & 73.63 \\
                        &(13.33)        & (37.37) &(73.63)\\
Sortino Ratio (Annualized)         & \textbf{2.52} & 1.16 & 1.58 \\
                        &(2.28)        &(1.16) & (1.58)\\

\midrule
\multicolumn{4}{l}{\texttt{ETH-USD}} \\
Total Return (\%)      & 249.84 & 327.55 & \textbf{2192.57} \\
                        & (231.33) & (327.32) & (2192.56) \\
Sharpe Ratio (Annualized)          & \textbf{1.70} & 0.84 & 1.05 \\
                        &(1.56)      &(0.84) & (1.05)\\
Maximum Drawdown (\%)      & \textbf{10.18} & 52.38 & 79.35 \\
                        &(11.51)         & (52.38)& (79.35)\\
Sortino Ratio (Annualized)         & \textbf{3.20} & 1.241 & 1.56 \\
                        &(2.90)         & (1.240) &  (1.56)    \\

\midrule
\multicolumn{4}{l}{\texttt{AAPL}} \\
Total Return (\%)      & 91.80 & 81.85 & \textbf{285.42} \\
                        &(73.91)& (81.69) & (285.42)\\
Sharpe Ratio (Annualized)          & \textbf{1.04} & 0.66 & 0.87 \\
                        &(0.87)        & (0.66) & (0.87)\\
Maximum Drawdown (\%)     & \textbf{12.85} & 26.38 & 33.36 \\
                        &(12.92)       & (26.38) & (33.36)\\
Sortino Ratio  (Annualized)        & \textbf{1.61} & 0.964 & 1.29 \\
                        &(1.34)        & (0.96) & (1.29)\\

\bottomrule
\end{tabularx}
    \smallskip
    {\footnotesize Selected hyperparameters $(\gamma,H,L)$: \texttt{ETH-USD} $(0.1, 11, 15)$; \texttt{TSLA} $(0.2, 12, 16)$; \texttt{AAPL} $(0.1, 17, 40)$.}
\end{table}

\section{Concluding Remarks}\label{conclusion}
In this paper, we propose an SMPC-based approach to dynamically optimize the weight selection~$w^*_k$ within the Double Linear Policy framework. Empirical evaluations using Bitcoin price data demonstrate that the proposed DLP--SMPC method improves risk-adjusted performance, particularly by constraining drawdowns and mitigating downside risk during the periods of high market volatility.

An interesting direction for future research is to relax the assumption that returns are $\mathcal{F}_k$-conditionally independent over the prediction horizon, to facilitate multi-asset portfolio settings. While conditional independence underlies our current SMPC formulation, extending this framework requires incorporating multivariate time-series models, such as Vector Autoregressive (VAR) or multivariate GARCH models, into the SMPC prediction horizon, allowing the control $w(k)$ to account for both serial dependence and cross-asset co-movements. Preliminary attempts to address this challenge have been made in \cite{hsieh2025robust}, where multi-asset risk management is studied in a more abstract market setting using generalized lattice-based~models.

\bibliographystyle{ieeetr}
\bibliography{refs}

\appendix 
 
\section{Technical Results} 
\label{appendix}
This section provides the derivation of the technical results.

\begin{proof}[Proof of Lemma~\ref{lemma: conditional moments}]
Since $\mathbf{z}_k$ is $\mathcal{F}_k$-measurable and the returns  $\{X(i)\}_{i=k}^{k+H-1}$ are conditionally independent given $\mathcal{F}_k$  with $\mathbb{E}_k[X(i)]=\mu_i^{(k)}$ and  $\mathbb{E}_k[X(i)^2] = \left(\mu_i^{(k)} \right)^2 + \left(\sigma_i^{(k)}\right)^2$, the system output satisfies
$
    y_{k+H} = \mathbf{c}^\top\Phi_{k+H,k}\,\mathbf{z}_k,   
$
where the state transition matrix is 
\begin{align*}
    \Phi_{k+H,k} 
    &=  A_{k+H-1} A_{k+H-2} \cdots A_k= \operatorname{diag}\!\left(d^+,\, d^-\right),
\end{align*}
with $d^\pm := \prod_{i=k}^{k+H-1}(1\pm w_iX(i)).$
Taking the conditional expectation $\mathbb{E}_k[\cdot]$, and using the conditional independence of $\{X(i)\}_{i=k}^{k+H-1}$ given $\mathcal{F}_k$ to factorize the expectation of the~products,
\begin{align*}
    & \mathbb{E}_k[\Phi_{k+H,k}] \\
    &= \operatorname{diag}\!\left(\prod_{i=k}^{k+H-1}\mathbb{E}_k[1+w_iX(i)],\;\prod_{i=k}^{k+H-1}\mathbb{E}_k[1-w_iX(i)]\right)\\ &= \operatorname{diag}(P^+,P^-) = \bar{\Phi},
\end{align*}
which yields $\mathbb{E}_k[y_{k+H}]=\mathbf{c}^\top\bar{\Phi}\,\mathbf{z}_k$.
Next, to compute the conditional variance, we first evaluate the second moment:
\[
y_{k+H}^2
=
(\mathbf c^\top \Phi_{k+H,k}\mathbf z_k)^2
=
\mathbf z_k^\top \Phi_{k+H,k}^\top \mathbf c \mathbf c^\top \Phi_{k+H,k}\mathbf z_k.
\]
Taking the conditional expectation gives
\[
\mathbb E_k[y_{k+H}^2]
=
\mathbf z_k^\top
\mathbb E_k\!\left[\Phi_{k+H,k}^\top \mathbf c \mathbf c^\top \Phi_{k+H,k}\right]
\mathbf z_k
=:
\mathbf z_k^\top M \mathbf z_k.
\]
Hence, the conditional variance evaluates to
\begin{align*}
    \operatorname{var}_k(y_{k+H})
    &=
    \mathbb E_k[y_{k+H}^2] - \bigl(\mathbb E_k[y_{k+H}]\bigr)^2\\
    &=
    \mathbf z_k^\top M \mathbf z_k
    -
    \mathbf z_k^\top \bar{\Phi}^\top \mathbf c \mathbf c^\top \bar{\Phi}\,\mathbf z_k\\
    &=
    \mathbf z_k^\top
    \bigl(M-\bar{\Phi}^\top \mathbf c \mathbf c^\top \bar{\Phi}\bigr)
    \mathbf z_k
    =
    \mathbf z_k^\top \Sigma\,\mathbf z_k.
\end{align*}
It remains to explicitly compute $M$. Since $\Phi_{k+H,k}=\operatorname{diag}(d^+,d^-)$ is diagonal, we note that
\[
\Phi_{k+H,k}^\top\mathbf{c}\mathbf{c}^\top\Phi_{k+H,k} 
= 
\begin{bmatrix} 
    (d^+)^2 & d^+d^- \\ d^+d^- & (d^-)^2 
\end{bmatrix}.
\]
Thus, $M_{pq}=\mathbb{E}_k[d^{(p)}d^{(q)}]$. Since $d^\pm=\prod_{i=k}^{k+H-1}(1\pm w_iX(i))$ and the $X(i)$ are conditionally independent, taking the conditional expectation and using independence to factorize each entry, we~have 
{\small
\begin{align*}
M_{11} 
    & = \mathbb{E}_k[(d^+)^2]
     = \prod_{i=k}^{k+H-1}\mathbb{E}_k\!\left[(1+w_iX(i))^2\right]\\ 
&= \prod_{i=k}^{k+H-1}\!\left(\left(1+w_i\mu_i^{(k)}\right)^2+\left(\sigma_i^{(k)}\right)^2w_i^2\right) = Q^+,\\
M_{22} 
    & = \mathbb{E}_k[(d^-)^2]
    = \prod_{i=k}^{k+H-1}\mathbb{E}_k\!\left[(1-w_iX(i))^2\right]\\ 
&= \prod_{i=k}^{k+H-1}\!\left(\left(1-w_i\mu_i^{(k)}\right)^2+\left(\sigma_i^{(k)}\right)^2w_i^2\right) = Q^-,\\
M_{12} 
    &= \mathbb{E}_k[d^+ d^-] 
    = \prod_{i=k}^{k+H-1}\mathbb{E}_k\!\left[1-w_i^2X(i)^2\right]
\\
&= \prod_{i=k}^{k+H-1}\!\left(1-w_i^2\left((\mu_i^{(k)})^2+\left(\sigma_i^{(k)}\right)^2\right)\right) = M_{21},
\end{align*}
}with  $\mathbb{E}_k[X(i)^2]= \left(\mu_i^{(k)} \right)^2 + \left(\sigma_i^{(k)} \right)^2$. Substituting $M$ into $\Sigma=M-\bar{\Phi}^\top\,\mathbf{c}\mathbf{c}^\top\bar{\Phi}$ completes the proof. 
\end{proof}

\begin{proof}[Proof of Theorem~\ref{theorem: gradient}] 
Fix $H>0$. We begin by analyzing the gradient for the conditional mean term $\mathbb{E}_k[y_{k+H}]  = \mathbf{c}^\top\bar{\Phi}\,\mathbf{z}_k$. From Lemma~\ref{lemma: conditional moments}, 
note that $\bar{\Phi}=\operatorname{diag}(P^+, P^-)$ with 
$P^\pm = \left(1 \pm \mu_{k+j}^{(k)} w_{k+j}\right) A_j^\pm$. Differentiating it with respect to $w_{k+_j}$ gives 
$
\frac{\partial\bar{\Phi}}{\partial w_{k+j}}= \mu_{k+j}^{(k)} D\bar{\Phi}_j,
$
where $D:= \operatorname{diag}(1, -1) $ and $\bar{\Phi}_j:= \operatorname{diag}(A_j^+, A_j^-)$.
Thus,
{\small
\begin{align} \label{eq: gradient of conditional }
\frac{\partial}{\partial w_{k+j}}\mathbb{E}_k[y_{k+H}] 
    &= 	\mathbf{c}^\top \frac{\partial\bar{\Phi}}{\partial w_{k+j}}\,\mathbf{z}_k
    = \mu_{k+j}^{(k)}\,\mathbf{c}^\top D\bar{\Phi}_j\,\mathbf{z}_k.
\end{align}
}
Next, we analyze the gradient for the variance term $\operatorname{var}_k(y_{k+H}) = \mathbf{z}_k^\top\Sigma\,\mathbf{z}_k$. Note that $\Sigma = M - \bar{\Phi}^\top\mathbf{c}\mathbf{c}^\top\bar{\Phi}$; hence differentiating it with respect to $w_{k+j}$ yields
{\small
\begin{align*}
    \frac{\partial\Sigma}{\partial w_{k+j}}     
        &= {M}'_j - \left(\frac{\partial\bar{\Phi}}{\partial w_{k+j}}\right)^\top \mathbf{c}\mathbf{c}^\top\bar{\Phi} - \bar{\Phi}\mathbf{c}\mathbf{c}^\top\frac{\partial\bar{\Phi}}{\partial w_{k+j}} \\ 
        &= {M}'_j - \mu_{k+j}^{(k)}\bigl(D\bar{\Phi}_j\mathbf{c}\mathbf{c}^\top\bar{\Phi} 
       + \bar{\Phi}\mathbf{c}\mathbf{c}^\top D\bar{\Phi}_j\bigr) =: \Sigma'_j.  
\end{align*}
}where last equality holds since $\bar{\Phi}$ and $\bar{\Phi}_j$ are diagonal, and hence $ \left(\frac{\partial \bar{\Phi}}{\partial w_{k+j}} \right)^\top = \frac{\partial \bar{\Phi}}{\partial w_{k+j}}$.

The entries of ${M}'_j$ follow from applying the product rule to the components of $M$ established in Lemma~\ref{lemma: conditional moments}.  For the diagonal entries $M_{11} = Q^+$ and $M_{22} = Q^-$, we have: 
{\small $$
    {m'}^\pm_j = \frac{\partial Q^\pm}{\partial w_{k+j}}=
    2\bigl(\pm\mu_{k+j}^{(k)}\bigl(1\pm\mu_{k+j}^{(k)} w_{k+j}\bigr)
    +\bigl(\sigma_{k+j}^{(k)}\bigr)^2 w_{k+j}\bigr)B_j^\pm
$$
}For the off-diagonal entries, we have
$\displaystyle \frac{\partial M_{12}}{\partial w_{k+j}}
= -2 \bigl((\mu_{k+j}^{(k)})^2 + (\sigma_{k+j}^{(k)})^2\bigr) w_{k+j} C_j$,
using
$\displaystyle \frac{\partial}{\partial w_{k+j}} \bigl( 1 - w_{k+j}^2 ((\mu_{k+j}^{(k)})^2 + (\sigma_{k+j}^{(k)})^2) \bigr)
= -2 ((\mu_{k+j}^{(k)})^2 + (\sigma_{k+j}^{(k)})^2) w_{k+j}$.
Combining the derivatives of the mean and variance terms~yields
\begin{align*}
\frac{\partial }{\partial w_{k+j}}J(\mathbf{w}) =
\mu_{k+j}^{(k)}\,\mathbf{c}^\top D\bar{\Phi}_j\mathbf{z}_k 
- \gamma\,\mathbf{z}_k^\top\Sigma'_j\mathbf{z}_k.  \qquad \qedhere    
\end{align*}

\end{proof}

\end{document}